\documentclass[a4paper,11pt]{article}
\usepackage[utf8x]{inputenc}

\usepackage[arrow, matrix, curve]{xy}
\usepackage{array,amsmath,amssymb,amsthm,verbatim,epsfig}
\usepackage{amsfonts}
\usepackage{graphicx}
\usepackage{setspace}

\theoremstyle{plain}
\newtheorem{theorem}{Theorem}

\newtheorem{definition}[theorem]{Definition}

\newtheorem{lemma}[theorem]{Lemma}

\newtheorem{remark}[theorem]{Remark}

\numberwithin{equation}{section}

\newcommand{\suchThat}{\enspace\vert\enspace}

\title{Note on expected internode distances for gene
trees in species trees}
\author{Martin Kreidl}
\date{}

\begin{document}

\maketitle
\onehalfspacing

\noindent
\small
\textbf{Address:} Insitut f\"ur Experimentelle Mathematik,
Universit\"at Duisburg-Essen\\
Ellernstra\ss e 29,
45326 Essen,
Germany\\
\textbf{Email:} martin.kreidl@uni-due.de

\normalsize \begin{abstract} In a recent paper on
	`Estimating Species Trees from Unrooted Gene Trees'
	Liu and Yu observe that the distance matrix on the
	underlying taxon set, which is built up from
	expected internode distances on gene trees under the
	multispecies coalescent, is tree-like, and that the
	underlying additive tree has the same topology as
	the true species tree. Hence they suggest to use
	(observed) average internode distances on gene
	trees as an input for the neighbor joining algorithm
	to estimate the underlying species tree in a
	statistically consistent way. In this note we give
	a rigorous proof of their above mentioned
	observation. \end{abstract}


\section{Introduction}

One of the possible reasons for discordance of a gene tree
with an underlying species tree is the phenomenon of
incomplete lineage sorting, which is described by the
multispecies coalescent model. Many authors have addressed
the problem of reconstructing the underlying species tree
from a set of discordant gene trees, both from a theoretical
perspective (e.g. Maddison \cite{Maddison_97}, Allman et al.
\cite{Allman_Unrooted}, and many others), as well as from an
practical resp. algorithmic perspective (see e.g. Ewing et
al.  \cite{Ewing_RootedTriple}, Liu et al.
\cite{Liu_Pseudo}, Than and Nakhleh \cite{Nakhleh_MDC},
Kreidl \cite{Kreidl_Quartets}). Recently, Liu and Yu have
published a paper \cite{Liu_Neighbor} in which they propose
to estimate the \emph{expected number of internodes between any
two taxa on gene trees} by averaging over the observed
numbers of internodes (on the observed gene trees), for any
pair of taxa. They note

\begin{theorem}[Liu and Yu, \cite{Liu_Neighbor}] Under the
	multispecies coalescent model on any fixed species
	tree, the expected number of internodes
	between two taxa on gene trees determines a
	tree-like metric on the taxon set, and the
	underlying tree topology is identical with the
	topology of the species tree.
\end{theorem}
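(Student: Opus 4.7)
The plan is to verify the four-point (Buneman) condition for the candidate metric $E[d(a,b)]$, where $d(a,b)$ denotes the random internode distance on the (random) gene tree $T_g$; tree-likeness and the identification with the species tree topology both follow at once from such a verification. Concretely, for every four distinct taxa $a,b,c,d$ whose species-tree quartet is $ab|cd$, I must show that the three pairwise sums $S_1:=E[d(a,b)]+E[d(c,d)]$, $S_2:=E[d(a,c)]+E[d(b,d)]$, $S_3:=E[d(a,d)]+E[d(b,c)]$ satisfy $S_1\leq S_2=S_3$.

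The starting point is a split decomposition. Since $d(a,b)=1+|\{\text{internal edges of }T_g\text{ separating }a\text{ from }b\}|$, linearity of expectation gives
\[E[d(a,b)] = 1 + \sum_{\pi} P(\pi\in T_g)\,\delta_\pi(a,b),\]
where the sum runs over non-trivial bipartitions $\pi$ of the taxon set and $\delta_\pi(a,b)$ is the indicator that $\pi$ separates $a$ and $b$. Grouping the bipartitions by their restriction to $\{a,b,c,d\}$ and writing $Q_{XY|ZW}$ for the cumulative weight $\sum_{\pi:\,\pi|_4 = XY|ZW}P(\pi\in T_g)$, a short calculation gives
\[S_1-S_2=2(Q_{ac|bd}-Q_{ab|cd}),\qquad S_2-S_3=2(Q_{ad|bc}-Q_{ac|bd}).\]
Hence the four-point condition with short side $ab|cd$ is equivalent to
\[Q_{ab|cd}\ \geq\ Q_{ac|bd}\ =\ Q_{ad|bc}.\]

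To establish the equality, I decompose $Q_{XY|ZW}$ using five-taxon restrictions. An edge of $T_g$ whose restriction to $\{a,b,c,d\}$ is $XY|ZW$ lies on the middle path joining the two induced cherries in the four-taxon Steiner subtree; splitting this middle path according to whether each additional taxon $x$ contributes a new internal node gives
\[Q_{XY|ZW} = p_{XY|ZW} + \sum_{x\notin\{a,b,c,d\}} P\bigl(T_g|_{\{a,b,c,d,x\}}=T^{XY|ZW}_x\bigr),\]
where $p_{XY|ZW}$ is the four-taxon quartet probability and $T^{XY|ZW}_x$ denotes the five-leaf tree with cherries $\{X,Y\}$ and $\{Z,W\}$ and $x$ on the middle edge. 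By the standard marginalisation property of the multispecies coalescent, the restriction of the gene tree is distributed as the coalescent on the restricted species tree. Because the species quartet on $\{a,b,c,d\}$ is $ab|cd$, the restricted five-taxon species tree has one of only five topologies (determined by the edge of the four-taxon quartet to which $x$ is attached), and a direct enumeration shows that in each case at least one of $\{a,b\}$ or $\{c,d\}$ remains a cherry. Since the multispecies coalescent is insensitive to pendant edge lengths, this cherry furnishes a label swap ($a\leftrightarrow b$ or $c\leftrightarrow d$) that leaves the gene tree distribution invariant and interchanges $T^{ac|bd}_x$ with $T^{ad|bc}_x$. Combined with the classical four-taxon identity $p_{ac|bd}=p_{ad|bc}$, this gives the desired equality.

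The remaining inequality $Q_{ab|cd}\geq Q_{ac|bd}$ reduces by the same decomposition, using $p_{ab|cd}>p_{ac|bd}$, to the claim $P(T_g|_5=T^{ab|cd}_x)\geq P(T_g|_5=T^{ac|bd}_x)$ for every choice of $x$. This is the most computational step and I expect it to be the main obstacle: for the topologies in which $x$ is sister to one of $a,b,c,d$ (so that no simple label-swap symmetry is available) one must analyse the multispecies coalescent in detail, keeping track of lineage counts as they pass through successive ancestral populations, to confirm that the five-taxon gene tree with the ``correct'' four-taxon quartet dominates its wrong-quartet competitor.
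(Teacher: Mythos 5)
Your reduction of the four-point condition to the statement $Q_{ab|cd}\geq Q_{ac|bd}=Q_{ad|bc}$ about expected split counts is correct, and it is an honest reformulation of the weight condition the paper works with (one has $w_D(ab,cd)=S_2+S_3-2S_1$). The argument breaks at the very next step: the identity $Q_{XY|ZW}=p_{XY|ZW}+\sum_{x}P\bigl(T_g|_{\{a,b,c,d,x\}}=T^{XY|ZW}_x\bigr)$ is false. The left-hand side is the expected number of edges on the middle path of the Steiner subtree of $\{a,b,c,d\}$, i.e.\ one plus the number of \emph{attachment points} strictly inside that path, whereas the right-hand side counts one plus the number of \emph{taxa} $x$ whose five-taxon restriction places $x$ on the middle edge; these differ whenever two or more extra taxa hang off the middle path at the same node. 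Concretely, for the six-leaf gene tree in which a cherry $(x,y)$ attaches to the middle edge of the $ab|cd$ quartet there are exactly two internal edges inducing $ab|cd$, but your formula gives $1+1+1=3$. Since both your equality $Q_{ac|bd}=Q_{ad|bc}$ and your inequality are derived termwise from this decomposition, neither is actually established (the equality is true, but needs a different argument).

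The second gap is the one you flag yourself: the dominance $P(T_g|_5=T^{ab|cd}_x)\geq P(T_g|_5=T^{ac|bd}_x)$ is asserted, not proved, and it is genuinely delicate --- comparisons of five-taxon gene-tree probabilities under the multispecies coalescent are precisely the territory of anomalous gene trees, so ``keeping track of lineage counts'' cannot simply be waved at. Both difficulties are artifacts of decomposing by splits. The paper instead partitions coalescent histories into those in which $A$ and $B$ coalesce below the species-tree node where the quartet is resolved --- such gene trees necessarily display $(AB,CD)$, so they contribute positively to $w_D(ab,cd)$ and negatively, with half the magnitude, to the other two weights --- and those in which the relevant lineages all enter the ancestral population separately, where an exact permutation symmetry of the entering lineages makes the three contributions cancel identically. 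That coarser partition delivers the equality and the strict inequality simultaneously, with no five-taxon computation. If you wish to salvage your route, you would need to replace the five-taxon marginals by a correct count of attachment points on the middle path and then exhibit a probability-preserving symmetry acting on those more complicated events.
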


Hence, they conculde, by applying the neighbor joining
algorithm to the matrix of average internode distances
(obtained from observed gene trees) is a statistically
consistent way to estimate the true species tree. It is the
goal of this note to give a rigorous and detailed proof of
Liu and Yu's theorem above.

We start by collecting, in Section \ref{sect:Preliminaries},
a few well-known facts on tree-like metrics, as well an easy
reformulation of the \emph{four-point condition} in terms of
\emph{weights of quartets} (for the terminology of weights
of quartet trees see e.g. Sturmfels and Pachter
\cite{Pachter_Sturmfels}). Section \ref{sect:Main}, finally,
contains the precise statement of Liu and Yu's theorem
together with its proof. The proof consists essentially in
checking that the `weight-version' of the four-point
condition from Section \ref{sect:Preliminaries} holds for
the matrix of expected numbers of internodes.
\newline

\noindent I would like to thank Liang Liu for his
interest in this modest note.

\section{Preliminaries on tree-like
metrics}\label{sect:Preliminaries}

In the following let $T$ be a finite set and let $D: T\times
T \to \mathbb{R}_{\geq 0}$ be a metric on $T$.

\begin{definition}
	The metric $D$ satisfies the \emph{four-point
	condition} if the maximum of the three numbers
	\begin{equation}\label{eqn:FourPoint}
		D(a,b) + D(c,d),\quad D(a,c) + D(b,d),\quad D(a,d) +
		D(b,c)
	\end{equation}
	is attained at least twice, for every four-element
	subset $\lbrace a,b,c,d\rbrace \subset T$.
\end{definition}

For every four taxon subset $\lbrace a,b,c,d\rbrace \subset
T$ we define the \emph{weight} of the quartet $(ab,cd)$,
according to the exposition by Pachter and Sturmfels
\cite{Pachter_Sturmfels}, to be the number
\begin{multline}
w(ab,cd) := w_{D}(ab,cd) = \\ = D(a,c)+D(a,d)+D(b,c)+D(b,d) -
2D(a,b) - 2D(c,d).
\end{multline}
In the following we will sometimes consider weights with
respect to different metrics $D$, which will be indicated by
a lower index.

\begin{definition}
	The metric $D$ is said to satisfy the \emph{weight
	condition} if the minimum of the three numbers
	\begin{equation}\label{eqn:Weight}
		w_{D}(ab,cd),\quad w_{D}(ac,bd),\quad w_{D}(ad,bc)
	\end{equation}
	is attained at least twice, for every four-element
	subset $\lbrace a,b,c,d\rbrace \subset T$.
\end{definition}

\begin{remark}
	(1) The sum of the three numbers in the weight
	condition is always 0. Thus their minimum is
	strictly negative and their maximum is strictly
	positive as soon as not all three numbers vanish.

	(2) If the metric $D$ is tree-like and the
	underlying tree $\mathcal{T}$ displays the quartet
	$(ab,cd)$, then $w(ab,cd)=4x$ and
	$w(ac,bd)=w(ad,bc)=-2x$, where $x$ is the distance
	between the paths connecting $a$ and $b$, and $c$
	and $d$, respectively.
\end{remark}

\begin{lemma}
	(1) For a metric $D$ the four-point condition and
	the weight condition are equivalent.

	(2) The metric $D$ is tree-like if and only if these
	conditions hold.

	(3) If $D$ is tree-like and $\mathcal{T}$ is the
	underlying tree, then $\mathcal{T}$ displays the
	quartet $(ab,cd)$ if and only if $D(a,b)+D(c,d)$ is
	the minimum of the three numbers in the four-point
	condition if and only if $w(ab,cd)$ is the maximum
	of the three numbers in the weight condition.
	\label{lem:FourPointWeight}
\end{lemma}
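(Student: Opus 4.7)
The three parts will be tackled in order; the main work is hidden in~(2), which is a classical theorem to be cited, while (1) and (3) reduce to short algebraic manipulations once the key linear identities are written down.

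For~(1), set $S_{1}:=D(a,b)+D(c,d)$, $S_{2}:=D(a,c)+D(b,d)$, $S_{3}:=D(a,d)+D(b,c)$ and $\Sigma:=S_{1}+S_{2}+S_{3}$, and record the identity
\[
w_{D}(ab,cd)=S_{2}+S_{3}-2S_{1}=\Sigma-3S_{1},
\]
together with the analogous formulas for the other two weights. Since the weight of the $i$-th quartet equals $\Sigma-3S_{i}$, the ordering of the three weights is the exact reverse of the ordering of $(S_{1},S_{2},S_{3})$; hence the maximum of the $S_{i}$'s is attained at least twice precisely when the minimum of the three weights is.

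For~(2), this is the classical Buneman--Zaretskii characterization of additive (tree-like) metrics by the four-point condition; I would simply cite it (e.g.\ from Pachter--Sturmfels~\cite{Pachter_Sturmfels}) and combine it with part~(1) to obtain the same statement for the weight condition.

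For~(3), suppose $D$ is tree-like with underlying tree $\mathcal{T}$ displaying the quartet $(ab,cd)$, and let $x>0$ be the length of the internal path $P$ in $\mathcal{T}$ separating the $a$--$b$ path from the $c$--$d$ path. Expanding each cross-distance $D(a,c),D(a,d),D(b,c),D(b,d)$ along the unique path in $\mathcal{T}$ shows that every such path traverses $P$ once, and bookkeeping the edge contributions yields the formulas already recorded in Remark~(2),
\[
w_{D}(ab,cd)=4x,\qquad w_{D}(ac,bd)=w_{D}(ad,bc)=-2x.
\]
Thus $w_{D}(ab,cd)$ is the strict maximum of the three weights and, by the reversal of orderings from~(1), $D(a,b)+D(c,d)$ is the strict minimum of the three four-point sums. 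The converses are immediate: $\mathcal{T}$ displays exactly one of the three quartet topologies on any non-degenerate four-taxon subset, and each choice produces its own distinguished maximum/minimum, so matching minima (or maxima) forces matching topologies.

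The principal obstacle is part~(2), where the Buneman--Zaretskii theorem does the real work of translating a system of metric inequalities into a tree structure; once that is invoked, parts~(1) and~(3) are essentially bookkeeping on the three linear combinations above.
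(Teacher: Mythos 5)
Your proposal is correct, and its logical organization differs from the paper's in a way worth noting. The paper does not prove part (1) as a standalone formal equivalence: it establishes a cycle of implications running through tree-likeness (four-point condition $\Rightarrow$ tree-like, by citation; tree-like and displaying $(ab,cd)$ $\Rightarrow$ weight condition with maximum at $w(ab,cd)$, by the Remark; weight condition with \emph{strict} maximum at $w(ab,cd)$ $\Rightarrow$ four-point condition with strict minimum at $D(a,b)+D(c,d)$, by an ad hoc computation with $w(ac,bd)-w(ad,bc)$). You instead isolate the purely linear-algebraic content: writing $S_1,S_2,S_3$ for the three four-point sums and observing $w_i=\Sigma-3S_i$, so that the passage from sums to weights is an order-reversing affine map. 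This proves (1) unconditionally (as the lemma actually asserts, with no tree in sight), handles uniformly the degenerate case where all three quantities coincide (which the paper's chain, phrased only for a strict maximum, silently skips), and makes the min/max correspondence in (3) immediate rather than a byproduct of the cycle. Both arguments lean on the same external inputs — the Buneman/Pachter--Sturmfels characterization for (2) and the $4x$ versus $-2x$ computation for the forward direction of (3) — so the geometric content is identical; what your route buys is a cleaner separation of the formal identity from the tree-theoretic theorem. The only point to tighten is the converse in (3): your trichotomy argument is fine for a non-degenerate four-taxon restriction, but you should say explicitly (as the statement implicitly assumes) that ``minimum'' is to be read as strict minimum, since for a star-like restriction all three sums coincide and no quartet is displayed.
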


\begin{proof} It is well known that if $D$ satisfies the
	four point condition then $D$ is tree-like, and
	moreover that the underlying tree displays the
	quartet $(ab,cd)$ if and only if the minimum in the
	four point condition is attained at $D(a,b)+D(c,d)$
	(for a proof see e.g. Pachter and Sturmfels
	\cite{Pachter_Sturmfels}). From the remark above it
	follows that if $D$ is tree-like with the underlying
	tree displaying the quartet $(ab,cd)$, then it
	satisfies the weight condition with maximum at
	$w(ab,cd)$. It remains to check that if $D$
	satisfies the weight condition with maximum at
	$w(ab,cd)$, then it satisfies the four-point
	condition with minimum at $D(a,b)+D(c,d)$.  Thus
	assume that $w(ab,cd) > w(ac,bd) = w(ad,bc)$.  Then
	$$ 0 = w(ac,bd) - w(ad,bc) = 3(D(a,d)+D(b,c)) -
	3(D(a,c)+D(b,d)).  $$ Hence we obtain $D(a,d)+D(b,c)
	= D(a,c)+D(b,d)$. By plugging this into the
	definition of $w(ac,bd)$ we obtain $$ 0 > w(ac,bd) =
	D(a,b) + D(c,d) - D(a,c) - D(b,d), $$ whence
	$D(a,b)+D(c,d) < D(a,d)+D(b,c) = D(a,c)+D(b,d),$
	which completes the proof.
	\end{proof}

\section{Expected internode distances on gene
trees}\label{sect:Main}

We consider a taxon set $T = \lbrace
t_{1},\dotsc,t_{N}\rbrace$ containing $N$ taxa, and a
species tree $S$ on $T$.  Assume that for each taxon $t\in
T$ we have sampled $n(t)$ copies of a given locus, denoted
$L_{i,1},\dotsc,L_{i,n(t_{i})}$ for each $i=1,\dotsc,N$. Let
$\mathcal{L}=\cup_{i}\lbrace L_{i,1},\dotsc,L_{i,n(t_{i})}
\rbrace$.
We follow the convention to denote the elements of
$\mathcal{L}$ by capital letters, while leaves on the
species tree $S$ (i.e. the elements of $T$) are denoted by
lower case letters.

For each rooted binary tree $G$ on the leaf set
$\mathcal{L}$ Liu and Yu define in \cite{Liu_Neighbor} the \emph{internode distance}
between leaves $I$ and $J$ to be the \emph{number of nodes
which lie on the path between $I$ and $J$ in $G$} ($I$ and
$J$ are not counted). This number, which we denote
$I_{G}(I,J)$, induces a metric on $\mathcal{L}$ and thus a
weight $W_{G}(IJ,KL) := w_{I_{G}}(IJ,KL)$ for each four
element subset $\lbrace I,J,K,L\rbrace \subset \mathcal{L}$.
Of course, for any $G$ the metric $I_{G}$ is tree-like with
underlying tree $G$.

\begin{definition}
	A \emph{coalescence pattern} associated with the
	species tree $S$ and the vector of multiplicities
	$(n(t_{1}),\dotsc,n(t_{N}))$ is a rooted tree $G$
	with leaf set $\mathcal{L}$ together with a map
	$$
	f: Nodes(G) \to Nodes(S)
	$$
	with the following two properties: (1) For every
	$L_{i,j}\in \mathcal{L}$ we have $f(L_{i,j}) = t_{i}
	\in T$, and
	(2) For any two nodes $m,n \in Nodes(G)$, if $n$ is a
	descendant of $m$ in $G$, then $f(n)$ is a
	descendant of $f(m)$ in $S$. We denote coalescence
	patterns as pairs $(G,f)$ in the sequel.
\end{definition}

A coalescence pattern is basically the same as what Degnan
and Salter \cite{Degnan_Distributions} call a (valid)
coalescent history. Each coalescence pattern $(G,f)$
(associated with $S$ and a multiplicity vector $v =
(n(t_{i}))_{i}$) occurs with a certain probability $P(G,f)$
under the multispecies coalescent, which is calculated in
the case $n(t)=1$ for all $t$ by Degnan and Salter in loc.
cit. This makes the set of coalescence patterns (for fixed
$S$ and $v$!) a probability space, and the internode
distance $I_{G}(I,J)$ for each gene tree $G$ between two
leaves $I,J\in \mathcal{L}$ induces a random variable on
this probability space, which we denote by $ID(I,J)$. By
abuse of language we call this random variable also
`internode distance' between $I$ and $J$.  Similarly for
the weight of a quartet $(IJ,KL)$ for $I,J,K,L\in
\mathcal{L}$: We denote the corresponding random variables
$W(IJ,KL)$, for each quartet $(IJ,KL)$.

Thus the following numbers, associated with $S$ and $v$, are
well-defined (here and in the following we suppress the
dependence on $S$ and $v$ in the notation, though we want to
stress once more that all this requires chosing and fixing a
multiplicity vector $v$!):

\begin{equation}
	\begin{split}
		D(I,J) = E(ID(I,J)) &=
		\displaystyle\sum_{(G,f)} P(G,f)\cdot
		I_{G}(I,J),\\
		E(W(IJ,KL)) &=
		\displaystyle\sum_{(G,f)} P(G,f)\cdot
		W_{G}(IJ,KL),
	\end{split}
	 \label{eqn:Expected}
\end{equation}
the \emph{expected internode distance} between two leaves
$I,J\in \mathcal{L}$, and the expected weight of a quartet
$(IJ,KL)$ under the multispecies coalescent model.

\begin{lemma}
	(1) Since for each $G$ the function $I_{G}$ is a
	metric, so is $D=E(I)$.
	(2) Hence the weight function $w_{D}$ is defined and
	satisfies $w_{D}(IJ,KL) = E(W(IJ,KL))$.
	\label{lem:Linear}
\end{lemma}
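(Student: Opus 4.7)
The plan is to verify both parts by reducing them to linearity of expectation, since $D$ is by definition a probability-weighted average of the metrics $I_{G}$.

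For part (1), I would check each axiom of a metric separately, in the order: non-negativity, symmetry, $D(I,I)=0$, positivity on distinct pairs, and finally the triangle inequality. Non-negativity, symmetry, and vanishing on the diagonal follow immediately from the corresponding properties of each $I_{G}$, since $D(I,J)$ is a convex combination (with weights $P(G,f)$) of the numbers $I_{G}(I,J)$. For positivity on distinct pairs, the key observation is that for any two distinct leaves $I\neq J$ in a rooted binary tree $G$ on $\mathcal{L}$, the path between $I$ and $J$ passes through at least one internal node (their most recent common ancestor), so $I_{G}(I,J)\geq 1$; hence $D(I,J)\geq 1>0$. For the triangle inequality, I would write
\begin{equation*}
D(I,J)+D(J,K)=\sum_{(G,f)}P(G,f)\bigl(I_{G}(I,J)+I_{G}(J,K)\bigr)\geq \sum_{(G,f)}P(G,f)\,I_{G}(I,K)=D(I,K),
\end{equation*}
using the triangle inequality for each $I_{G}$ together with non-negativity of the probabilities.

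For part (2), since $w_{D}$ is defined as a fixed $\mathbb{Z}$-linear combination of the six pairwise distances among $I,J,K,L$, and the analogous combination of the six values of $I_{G}$ is by definition $W_{G}(IJ,KL)$, linearity of expectation yields
\begin{equation*}
w_{D}(IJ,KL)=\sum_{(G,f)}P(G,f)\,W_{G}(IJ,KL)=E(W(IJ,KL)),
\end{equation*}
as claimed.

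There is no real obstacle here; the argument is essentially a bookkeeping exercise showing that being a metric is preserved under taking convex combinations and that the weight construction $D\mapsto w_{D}$ commutes with expectation because it is linear in $D$. The only point that deserves a line of justification rather than mere invocation is positivity on distinct pairs, which relies on the elementary fact that distinct leaves in a binary tree are separated by at least one internal node.
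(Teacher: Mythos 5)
Your proposal is correct and follows exactly the route the paper takes, namely reducing both claims to linearity of expectation over the convex combination $\sum_{(G,f)}P(G,f)\,I_{G}$; the paper simply states this in one line, whereas you spell out the metric axioms and the one point genuinely worth a word, positivity on distinct pairs via $I_{G}(I,J)\geq 1$. No gaps.
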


\begin{proof}
	Both claims are immediate consequences of linearity
	of expected values.
\end{proof}

Finally, we note that the expression $D(I,J)$ does not
really depend on the leaves $I,J$, but only on the taxa in
$T$ they belong to. Thus we have defined a metric
$$
D(i,j) \in \mathbb{R}_{\geq 0}, \quad\text{for any two taxa
$i,j\in T$,}
$$
as well as a weight (depending on $D$)
$$
w_{D}(ij,kl) \in \mathbb{R},\quad \text{for any four taxa
$i,j,k,l\in T$}.
$$

Combining Equation \eqref{eqn:Expected} with Lemma
\ref{lem:Linear} we obtain that we may calculate the
weight $w_{D}(ij,kl)$ for four taxa $i,j,k,l\in T$ as
\begin{equation}
	w_{D}(ij,kl) = \displaystyle\sum_{(G,f)}P(G,f)\cdot
	W_{G}(IJ,KL),
	\label{eqn:ExpectedWeight}
\end{equation}
where $(G,f)$ runs through all possible coalescence
patterns, and where $I\in \mathcal{L}$ is any locus corresponding to
$i\in T$, $J$ any locus corresponding to $j\in T$ and so forth.

\begin{theorem}[Liu and Yu, \cite{Liu_Neighbor}, Theorem A1]
	The metric $D=E(I)$ is tree-like, and the underlying
	additive tree has the same topology as the true
	species tree $S$.
	\label{thm:Liu}
\end{theorem}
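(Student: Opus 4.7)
The plan is to invoke Lemma~\ref{lem:FourPointWeight}(2) and (3), reducing the theorem to verifying, for each four-element subset $\{i,j,k,l\}\subset T$ for which $S$ displays $(ij,kl)$, that $w_D(ij,kl)$ is strictly larger than both $w_D(ik,jl)$ and $w_D(il,jk)$, and that the latter two are equal. I fix such a subset, choose representative loci $I,J,K,L\in\mathcal{L}$, and use Equation \eqref{eqn:ExpectedWeight} to rewrite each of the three weights as an expectation of $W_G$ over the probability space of coalescence patterns.

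I then partition coalescence patterns into three disjoint classes $A,B,C$ according to which of the three quartets $G$ displays on the four leaves $\{I,J,K,L\}$. By Remark~(2) preceding Lemma~\ref{lem:FourPointWeight}, on class $A$ the triple $\bigl(W_G(IJ,KL),\,W_G(IK,JL),\,W_G(IL,JK)\bigr)$ has the form $(4x,-2x,-2x)$ with $x=x_{(G,f)}>0$ (the distance in $G$ between the two cherry-MRCAs of the induced 4-leaf subtree), and analogously on $B$ and $C$ with the positive entry shifted to the corresponding slot. Setting $\alpha=\sum_A P(G,f)\,x_{(G,f)}$, $\beta=\sum_B P(G,f)\,x_{(G,f)}$ and $\gamma=\sum_C P(G,f)\,x_{(G,f)}$, the three weights become
\[
w_D(ij,kl)=4\alpha-2\beta-2\gamma,\quad w_D(ik,jl)=-2\alpha+4\beta-2\gamma,\quad w_D(il,jk)=-2\alpha-2\beta+4\gamma,
\]
and the theorem reduces to the two identities $\beta=\gamma$ and $\alpha>\beta$.

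Both of these I aim to obtain from the exchangeability of Kingman's coalescent within each population of $S$. The crucial observation is that every coalescence pattern lying in class $B$ or $C$ must have all four ancestral lineages of $I,J,K,L$ reach the population of $\mathrm{MRCA}(i,j,k,l)$ in $S$ as separate (uncoalesced) lineages, because any earlier coalescence of $I,J$ in the $(ij)$-branch or of $K,L$ in the $(kl)$-branch of $S$ forces the induced gene-tree quartet to be $(IJ,KL)$ and hence places the pattern in class $A$. Conditional on this ``all four reach'' event, all lineages in the common population (and above) are exchangeable under Kingman's coalescent; applied to the labels $K$ and $L$, this exchangeability identifies the joint distribution of $(x_{(G,f)},\mathbf{1}_B)$ with that of $(x_{(G,f)},\mathbf{1}_C)$, yielding $\beta=\gamma$. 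Strictness in $\alpha>\beta$ then follows because coalescence patterns with an early coalescence of $I,J$ (or of $K,L$) lie automatically in class $A$ but contribute to neither $\beta$ nor $\gamma$, and such patterns have strictly positive probability whenever $S$ has positive internal branch lengths.

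The main obstacle will be executing the exchangeability argument rigorously in the general setting, where arbitrarily many other loci coexist in the coalescent and contribute extra internal nodes along the path in $G$ between the two cherry-MRCAs. One has to check that the distribution of $x_{(G,f)}$ remains invariant under the conceptual swap of the $K$- and $L$-roles above $\mathrm{MRCA}(k,l)$ in $S$, even though the ancestral lineages of $K$ and $L$ may have accumulated different numbers of ``extra'' coalescences before reaching the common population.
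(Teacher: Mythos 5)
Your reduction to the two identities $\beta=\gamma$ and $\alpha>\beta$ is correct and matches the paper's strategy of verifying the weight condition via Lemma \ref{lem:FourPointWeight}, but the containment on which both identities rest is false in general: it is not true that every coalescence pattern in class $B$ or $C$ has all four ancestral lineages reaching the population above $M_{S}(i,j,k,l)$ uncoalesced. If the induced four-taxon subtree of $S$ is the caterpillar $(((i,j),k),l)$, the lineages ancestral to $I$, $J$ and $K$ all coexist in the population between $M_{S}(i,j,k)$ and $M_{S}(i,j,k,l)$, and a coalescence of $I$ with $K$ there produces a pattern in class $B$ (a coalescence of $J$ with $K$ one in class $C$) in which only three lineages reach the top population. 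Your justification excludes only coalescences of $I$ with $J$ and of $K$ with $L$, which suffices in the balanced case $((i,j),(k,l))$ but not in the caterpillar case. Consequently the proposed $K\leftrightarrow L$ swap in the top population does not establish $\beta=\gamma$ (it misses the class-$B$ and class-$C$ mass generated below the root of the induced subtree), and the complement of the ``all four reach'' event is not contained in class $A$, so the strictness argument for $\alpha>\beta$ breaks down as well.

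Both identities can be repaired, and the repair is essentially the paper's proof. For $\beta=\gamma$, swap $I$ and $J$ rather than $K$ and $L$: their ancestral lineages both enter the population above $M_{S}(i,j)$, the lowest internal node of the induced subtree, and are exchangeable there and in every ancestral population; this swap preserves the probability and the internal separation $x$ (the pendant path lengths you worry about cancel in the weights, which depend only on $x$ by Remark (2)), fixes class $A$ setwise, and exchanges $B$ with $C$. For $\alpha>\beta$ the paper partitions patterns according to whether $I$ and $J$ coalesce below $M_{S}(i,j,k)$: if they do not, then $I$, $J$, $K$ enter the population above $M_{S}(i,j,k)$ as three exchangeable lineages, and the induced action of the symmetric group on $\lbrace I,J,K\rbrace$ shows that this part of the probability space contributes equally to $\alpha$, $\beta$ and $\gamma$; if they do coalesce early, the pattern lies automatically in class $A$ and contributes a strictly positive extra amount to $\alpha$ alone.
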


For the proof we introduce a little piece of notation: For
any rooted tree $\mathcal{T}$ and a finite subset of leaves
$l_{1},\dotsc,l_{k}$ we denote by
$M_{\mathcal{T}}(l_{1},\dotsc,l_{k})$ the \emph{most recent
common ancestor} of $l_{1},\dotsc,l_{k}$ in $\mathcal{T}$.

\begin{proof} By Lemma \ref{lem:FourPointWeight} it suffices
	to check that, if the species tree $S$ displays the
	quartet $(ab,cd)$, then the following holds: $$
	w_{D}(ab,cd) > w_{D}(ac,bd) = w_{D}(ad,bc).  $$ This
	is relatively easy to check using equation
	\eqref{eqn:ExpectedWeight}. We thus assume that $S$
	displays the quartet $(ab,cd)$, and we consider gene
	lineages $A,B,C,D$ sampled from the respective taxa.
	We have to distinguish two cases, namely: (1) The
	(rooted) subtree $S'$ of $S$ with leaf set $\lbrace
	a,b,c,d \rbrace$ has the shape of a caterpillar
	tree, and (2) $S'$ has the balanced shape. In case
	(1) we assume without loss of generality that $S'$
	has the topology $(((a,b),c),d)$, while in the
	second $S'$ must have the topology $((a,b),(c,d))$.
	$S'$ has the balanced shape. In case (1) we assume
	without loss of generality that $S'$ has the
	topology $(((a,b),c),d)$, while in the second case
	$S'$ must have the topology $((a,b),(c,d))$.

	We now partition the set of coalescence patterns
	$(G,f)$ into two disjoint subsets $X$ and $Y$:
	In case (1) we define
	\begin{equation}
		\begin{split}
		X &= \lbrace (G,f) \suchThat f(M_{G}(A,B))
		\text{ is ancestral to } M_{S}(a,b,c) \rbrace,\\
		Y &= \lbrace (G,f) \suchThat (G,f) \notin Q
		\rbrace.
		\label{eqn:disj}
	\end{split}
	\end{equation}
	Note that $(G,f)\in P$ then means that the lineages
	$A$ and $B$ coalesce below the point where the
	populations $c$ merges with the population ancestral
	to $a$ and $b$.
	In case (2) we set
	\begin{equation}
		\begin{split}
		X = \lbrace (G,f) \suchThat& f(M_{G}(A,B))
		\text{ and } f(M_{G}(C,D)) \\ &\text{ are both
		ancestral to } M_{S}(a,b,c,d) \rbrace,\\
		Y = \lbrace (G,f) \suchThat& (G,f) \notin Q
		\rbrace.
		\label{eqn:disj2}
	\end{split}
	\end{equation}
	
	Consider the case of a
	coalescence pattern $(G_{1},f_{1})\in X$. Then
	the lineages of $A$, $B$ and $C$ on $G_{1}$ enter
	the population above $M_{S}(a,b,c)$ separately.
	Hence, by permuting the lineages $A$, $B$ and $C$ we
	obtain coalescence patterns $(G_{2},f_{2}),
	(G_{3},f_{3}) \in X$ such that $P(G_{1},f_{1}) =
	P(G_{2},f_{2}) = P(G_{3},f_{3})$, and such that,
	after possibly renumbering of the coalescence
	patterns, $G_{1}$ displays $(AB,CD)$, $G_{2}$
	displays $(AC,BD)$ and $G_{3}$ displays $(AD,BC)$,
	and such that
	\begin{equation}
		\begin{split}
			-2W_{G_{1}}(AC,BD) = -2W_{G_{1}}(AD,BC) =
			W_{G_{1}}(AB,CD) = x,\\
			-2W_{G_{2}}(AB,CD) = -2W_{G_{2}}(AD,BC) =
			W_{G_{2}}(AC,BD) = x,\\
			-2W_{G_{3}}(AB,CD) = -2W_{G_{3}}(AC,BD) =
			W_{G_{3}}(AD,BC) = x,
		\end{split}
		\label{eqn:GeneTreeWeightsX}
	\end{equation}
	where $x$ is the number of nodes on the path
	connecting the path between $A$ and $B$, and $C$ and
	$D$, respectively, in $G_{1}$.

	On the other hand, if $(G,f) \in Y$, then $G$
	necessarily displays the quartet $(AB,CD)$. Hence
	for such $G$ we have
	\begin{equation}
		\begin{split}
			W_{G}(AB,CD) &> 0,\quad\text{
			while}\\
			W_{G}(AC,BD) = W_{G}(AD,BC) &=
			-\frac{1}{2}W_{G}(AB,CD).
		\end{split}
		\label{eqn:GeneTreeWeightsY}
	\end{equation}

	Now recall equation \eqref{eqn:ExpectedWeight} and
	write
	\begin{multline*}
		w_{D}(ij,kl) =
		\displaystyle\sum_{(G,f)}P(G,f)\cdot
		W_{G}(IJ,KL) =\\=
		\displaystyle\sum_{(G,f)\in X}P(G,f)\cdot
		W_{G}(IJ,KL) + 
		\displaystyle\sum_{(G,f)\in Y}P(G,f)\cdot
		W_{G}(IJ,KL)
	\end{multline*}
	From Equation \eqref{eqn:GeneTreeWeightsX} we see
	that in the expressions $w_{D}(ab,cd)$,
	$w_{D}(ac,bd)$ and $w_{D}(ad,bc)$ the sum over the
	$(G,f)\in X$ vanishes, and equation
	\eqref{eqn:GeneTreeWeightsY} further implies that
	\begin{equation}
		\begin{split}
			w_{D}(ab,cd) &=
			\displaystyle\sum_{(G,f)\in
			Y}P(G,f)\cdot
			W_{G}(AB,CD) > 0,\quad\text{while}\\
			w_{D}(ac,bd) = w_{D}(ad,bc) &=
			\displaystyle\sum_{(G,f)\in
			Y}P(G,f)\cdot
			-\frac{1}{2}W_{G}(AB,CD) = \\ =
			-\frac{1}{2}w_{D}(ab,cd)
		\end{split}
		\label{eqn:ProofFinal}
	\end{equation}
	This shows that $D$ satisfies the weight condition,
	with the maximum attained for the quartet $(ab,cd)$.
	Invoking Lemma \ref{lem:FourPointWeight} completes the
	proof.
\end{proof}

\end{document}